\DeclareMathOperator*{\argmax}{argmax}
\newtheorem{theorem}{Theorem}
\newtheorem*{remark}{Remark}
\title{\LARGE \bf
Non-parametric Probabilistic Load Flow using Gaussian Process Learning}
\author{Parikshit Pareek, Chuan Wang and Hung D. Nguyen*
\thanks{$^\star$Corresponding Author}
\thanks{Authors are with School of Electrical and Electronics Engineering, Nanyang Technological University, Singapore. \textit{pare0001,chuan002,hunghtd@ntu.edu.sg}}}%
\begin{document}

\maketitle

\begin{abstract}
In this work, we propose a non-parametric probabilistic load flow (NP-PLF) technique based on the Gaussian Process (GP) learning to understand the power system behavior under uncertainty for better operational decisions. The technique can provide ``\textit{semi-explicit}'' power flow solutions by implementing the learning and testing steps which map control variables to inputs. The proposed NP-PLF leverages upon GP upper confidence bound (GP-UCB) sampling algorithm. The salient features of this NP-PLF method are: i) applicable for power flow problem having power injection uncertainty with an unknown class of distribution; ii) providing probabilistic learning bound (PLB) which further provides control over the error and convergence; iii) capable of handling intermittent distributed generation as well as load uncertainties, and iv) applicable to both balanced and unbalanced power flow with different type and size of power systems. The simulation results performed on the IEEE 30-bus and IEEE 118-bus system show that the proposed method can learn the voltage function over the power injection subspace using a small number of training samples. Further, the testing with different input uncertainty distributions indicates that complete statistical information can be obtained for the probabilistic load flow problem with average percentage relative error of order $10^{-3}$\% on 50000 test points.
\end{abstract}


\section{INTRODUCTION}
The decision-making process and control of power systems rely on the solution of optimal power flow (OPF) problem, solved by minimization of a variety of objectives such as real power generation cost. At the core of the OPF problem, there lies power balance equality constraints that govern the relationship between the voltages with the power injections, the so-called load flow problem. The non-convex load flow constraints pose obstacles in the decision-making process and in obtaining optimal setpoints, more so under uncertainty. Therefore, it is essential to understand such voltage-power relationship with the power input inside a uncertain space. With the increasing integration of renewable sources, such as wind power and photovoltaic (PV) energy, the generation parts of electrical systems may behave differently depending on the specific scenario \cite{da2018risk}. On the other hand, there is also uncertainty in the load demand side, such as the impact of electric vehicles. Therefore, an essential analysis tool is probabilistic load flow (PLF), which provides an understanding of equality constraint and does mapping between voltage variables and the power inputs. 

The PLF was first proposed in \cite{borkowska1974probabilistic}, considering the impact of the input variables with known uncertainty distribution characteristics. The PLF approach can provide the probability density functions (PDFs) of the output values such as the power flow and nodal voltages. In literature, the methods that have been applied to solve the PLF problem can be divided primarily based on the type of tools employed, i.e., numerical methods and analytical methods \cite{allan1981evaluation}. One of the most widely used numerical methods is Monte-Carlo simulation (MCS), which enters randomly generated variables to calculate the corresponding output by a large number of repetitive power flow calculations \cite{hungzheng, constante2018data}. Although MCS can handle full nonlinear power flow equations, it requires a large number of iterations, which leads to a high computational cost and a long elapsed time. Further, many questions like how many points are sufficient and what can be the maximum error for any new point, cannot be answered using pure numerical approaches. 

The other class, analytical methods for PLF, is mainly a set of statistical approaches aiming to obtain the results of some special moments or the mean, variance, and PDFs for the state and output variables \cite{da2018risk}. Among them, convolution techniques follow mathematical assumptions in order to simplify the power flow problem \cite{hatziargyriou1993probabilistic}. But these methods do not apply to complex problems, such as AC power flow in unbalanced power distribution
systems \cite{nosratabadi2018nonparametric}.  Further, for preserving power flow equations non-linearity, a set of deliberated operating conditions are used in point estimate methods \cite{morales2007point}.
However, their accuracy is low in estimating high order moments of probability distributions, especially for complex systems with many inputs \cite{liu2018probabilistic}. The third analytical method is approximation expansions based on cumulants \cite{zhang2004probabilistic}. It is designed to obtain the cumulants of outputs from the cumulants of inputs through a simple mathematical process. 
These analytical methods can indeed reduce the computational costs, however, they suffer main drawbacks like: 1) requirement of model simplifications and adjustment of parameters, essentially losing information of tails of PDFs; 2) need of linearizing power flow equations for a specific operating condition, etc. With the high penetration of distributed energy resources, their accuracy declines due to ignoring non-linearity \cite{liu2018probabilistic} of the power balance equation. Further, analytical methods are parametric and work on a fixed, specific type of uncertainty distribution only. Primarily, the numerical methods capture non-linearity but are slow while analytical methods need approximations leading to lower accuracy. 


In the view of the aforementioned limitations in understanding power flow behavior under uncertainty, this work presents a non-parametric probabilistic load flow (NP-PLF) method using Gaussian process regression. The growing renewable generation (especially PV) and electric vehicles lack historical data and efficient forecasting. Therefore, it is challenging to obtain accurate uncertainty distributions for these uncertainties. This makes it imperative to develop non-parametric methods which do not require input uncertainty distribution information.  The proposed method deals with two inherent difficulties faced in obtaining PLF solutions: i) the non-linearity of the power flow equation set, and ii) lack of statistical information about uncertainty and complexity in modeling random power injection PDFs as input. We develop probabilistic learning bound (PLB) using regret bounds of the so-called GP upper confidence bound (GP-UCB) sampling algorithm The PLB provides control over the desired accuracy and confidence level of learning. Here, the learning step of proposed NP-PLF works without any PDF of input uncertainty and testing step allows obtaining the state and output variables (e.g., nodal voltages) for any class and type input uncertainty distribution. This two-step method makes the proposed approach ``\textit{semi-explicit}'' in terms of the form of the power flow solution, while conventional numerical methods are implicit. By ``\textit{semi-explicit}'', we meant that we do not provide any fully analytical form of the voltage solutions, but the corresponding mean and error bound functions of power inputs expressed using the Gaussian process kernels such as those in \eqref{eq:gp}. 

The method can also serve as a multi-dimensional continuation power flow (CPF) \cite{ajjarapu1992continuation} to estimate the power-voltage curve with a given confidence level as shown in fig.\ref{fig:rg4_V25_gp}. Further, this proposed uncertainty propagation framework could help in understanding other nonlinear balance equation behavior to facilitate the decision-making process, thus lend itself to smart buildings, microgrids, and control of electric transportation systems such as EVs. 

The main contributions of the paper are as follows.
\begin{enumerate}
    \item A novel non-parametric probabilistic load flow (NP-PLF) method, which handles the uncertain power injections. The method is generic as it does not rely on the class of uncertainty distribution. It is fast and captures the non-linearity of power flow equations. The learning step provides a ``\textit{semi-explicit}'' form of power flow solutions, while the testing step can provide statistical information for the PLF solution. 
    \item Development of probabilistic learning bound (PLB) for the inverse power flow solution by GP learning and GP-UCB sampling. This PLB can serve as a convergence criterion for the learning stage algorithm and bounds the output solution within the specific range. Thus, PLB provides the confidence level for statistical information of output.
\end{enumerate}

\color{black}

\section{Background}\label{sec:gp}
In this section, we introduce GP regression and GP-UCB algorithm for sampling training points for GP learning bounded by regret bounds. Hereafter, $\mathbf{X}=\{\mathbf{x}^i \in \mathbb{R}^D\}$ represents matrix having $N$ training points and $\mathbf{y}=\{ y^i=y(\mathbf{x}^i) \in \mathbb{R} \}$ indicates $N$ observations over these training points for $i=1, \dots , N$.

\subsection{Gaussian Process Regression (GPR)}
The Gaussian process is the backbone of the Bayesian optimization paradigm. The analogy extension of the multivariate Gaussian function and interpretation of GP as a distribution over random functions \cite{williams2006gaussian} is intuitive and very useful as a non-parametric method, for modeling PDFs over functions. In power system, various forecasting applications for wind power \cite{lee2013short,yan2015hybrid}, solar power \cite{sheng2017short}, and electricity demand \cite{van2018probabilistic} have been developed based on GP. Other than these forecasting works, the idea of using GP to learn the dynamics and stability index behavior has been explored recently in \cite{zhai2019region,pareek2019probabilistic}.


The general non-parametric GP regression (GPR), with a training data set $\mathcal{D}=\{\mathbf{X},\, \mathbf{y} \}$, attempts to infer the latent function $f: \mathbb{R}^D \mapsto \mathbb{R}$ \cite{williams2006gaussian} with \textit{i.i.d.} noise $\varepsilon \sim \mathcal{N}(o,\sigma^2_\varepsilon)$:
\begin{align}\label{eq:reg}
         y(\mathbf{x}^i) =f(\mathbf{x}^i)+\varepsilon, \quad i=1 \dots N.
\end{align}

The prior over this latent function, with zero mean GP, is placed as $f(\mathbf{x}) \sim \mathcal{GP}\big ( 0,k(\mathbf{x},\mathbf{x}') \big )$. The kernel function $k(\mathbf{x},\mathbf{x}')$ incorporate our comprehension of unknown function into GP. We use squared exponential kernel for its smoothness property in this work. 
In the probabilistic load flow (PLF) problem, $\mathbf{x}$ can represent uncertain power injections such as those from intermittent renewable sources, and $ y(\mathbf{x})$ refers to the numerical voltage solution corresponding to a load injection point of $\mathbf{x}$.  Interested reader can look into \cite{williams2006gaussian} for details of GP fundamentals. 
 Following the extension of Bayesian rule \cite{williams2006gaussian}, analytical formula set for posterior  distribution for \eqref{eq:gp} are obtained as:
\begin{subequations}\label{eq:gp}
\begin{align}
    \mu(\mathbf{x}') & = k(\mathbf{x}',\mathbf{X})^T K^{-1}\mathbf{y}\\
    \sigma^2(\mathbf{x}') &= k(\mathbf{x}',\mathbf{x}')- k(\mathbf{x}', \mathbf{X})^T K^{-1} k(\mathbf{x}', \mathbf{X})
\end{align}
\end{subequations}

Here, $K = k(\mathbf{X},\mathbf{X})+\sigma^2 I$,  $\mu(\mathbf{x}')$ is the inferred mean, $k(\mathbf{x},\mathbf{x}')$ is the covariance, and inferred variance is given as $\sigma^2(\mathbf{x}')$ at input $\mathbf{x}'$ \cite{williams2006gaussian}. The random input vector $\mathbf{x'}$ can be drown from any type of uncertainty distribution.

\subsection{Gaussian Process Upper Confidence Bound (GP-UCB)}
Sampling schemes play an important role in learning latent function. This paper relies upon the GP-UCB, widely used in Bayesian optimization paradigm \cite{srinivas2012information}, for sampling of input vectors. The target is to obtain the mean $\mu(\mathbf{x})$ for function $f(\mathbf{x})$ with least standard deviation $\sigma(\mathbf{x})$ and probability at least $1-\delta$ with $\delta \in (0,1)$. A joint multi-objective function balancing between exploration and exploitation is opted for obtaining next input point $\mathbf{x}^{i}$. With $\beta_i$ taken independent of state vector and $\mathcal{S}$ being uncertain input subspace, the sampling strategy will be \cite{srinivas2012information} : 
\begin{align}\label{eq:gpucb}
    \mathbf{x}^{i} = \argmax_{\mathbf{x}^{i} \in \mathcal{S}}  \left \{ \mu_{i}(\mathbf{x}) + \sqrt{\beta_{i+1}}\,\,\sigma_{i}(\mathbf{x}) \right \}.
\end{align} 

 Intuitively, \eqref{eq:gpucb} means that next sampled input point will be the one where weighted sum, $\mu_{i}(\mathbf{x}) + \beta_{i+1}^{1/2}\sigma_{i}(\mathbf{x})$, of mean and variance is maximum.  
 Interested readers can refer to \cite{srinivas2012information} for more detail of this sampling strategy and GP-UCB.


\section{Non-Parametric Probabilistic Load Flow (NP-PLF)}\label{sec:NP-PLF}

In this section, we present the proposed NP-PLF algorithm with the main result as a probabilistic guarantee bound on power flow solution. First, we present the generic power flow formulation and inverse power flow analogy for obtaining uncertain state variables corresponding to uncertain input. 

The relationship between apparent the power injection at $l^{th}$ and node voltages for an $n$-bus network is given as: 
\begin{align} \label{eq:pf}
      S_l  & = \sum^n_{m=1} |V_l||V_m|(\cos{\theta_{ml}}+j\sin{\theta_{ml}})(G_{ml}-jB_{ml}).
\end{align}

Here, $S_l$ is the complex apparent power injection in node $k$, $V_l=|V_l|\angle{\theta_l}$ is node voltage while $\theta_{ml}=\theta_m-\theta_l$ is angle difference between node $m$ and $l$. The $G_{kl}$ and $B_{kl}$ are conductance and susceptance of line connecting node $m$ and $l$. This complex non-linear power flow equation \eqref{eq:pf} is solved using iterative numerical methods such as Newton-Rapson in a deterministic manner for one set of input. There does not exist any explicit form for the relationship between the node voltage and power injections.  We define an input vector $\mathbf{x}= [S_1, \dots, S_l \dots, S_n]^T$ for $n$-bus network then the family of power flow equation \eqref{eq:pf} can be written in compact form as $ \mathbf{x} = h(\mathbf{y}) $ with $\mathbf{y}$ being collection of node voltage magnitude and angle. 

Now, we want to learn the inverse power flow function, mapping the voltage and power injection relationship using GPR as $|V_l|=f_l(\mathbf{x})$ where, $f_l: \mathbb{R}^n \mapsto \mathbb{R}$ is \textit{inverse power flow} function. In other words, we want to learn the relationship between voltage and power injection in uncertain subspace $\mathbf{x} \in \mathcal{S}$ of power injections. 




Here, we assume that power flow is solvable in the considered space of random power injections $\mathbf{x} \in \mathcal{S}$. Therefore, the inverse power flow is well-defined. Once the voltage solution is known, other network quantities such as branch flows or power loss can be calculated. We also assume that the network structure is unchanged for power flow studies without contingencies. 
 

The non-linearity of \eqref{eq:pf} mainly restrict efficient decision making under uncertainty as we cannot obtain the solution space tractably. More importantly, any individual state variable gets affected by the complete set of nonlinear equations as multiple voltages are present in \eqref{eq:pf} with $V_l$. This non-linearity directly depends on the network graph as the values of $G_{ml}, B_{ml}$ are zero when there is no connection between node $m$ and $l$. In such a case, $V_m$ will not directly appear in \eqref{eq:pf}. Thus, for different networks, the non-linearity affects various voltages differently. Therefore, the uncertainty propagation requires various approximations and complex formulations \cite{fan2012probabilistic,amid2018cumulant} for a specific type of network and uncertainty distribution. Therefore, these methods cannot be generalized directly. 

The dependence of various PLF methods on uncertainty distribution information is also a significant difficulty. The PV generation and EV penetration uncertainty distributions do not follow specific distributions. There is a big challenge in estimating the distribution associated with net power injection uncertainty, uniquely when multiple uncertain loads (EVs) and generation sources (PVs) are integrated into the system. To deal with these issues of non-linearity and lack of uncertainty information, we present the main result of GP based NP-PLF below.

\subsection{Main Result}

Following the analogy presented in \eqref{eq:reg}, the Newton-Raphson load flow (NRLF) solution obtained for input sample $\mathbf{x}^{i}$ can be interpreted as $y_l^{i}$ containing $f_l(\mathbf{x}^{i})$ for bus $l$ with numerical computation noise as:
 \begin{align}\label{eq:yhat}
      y_l^{i} = f_l(\mathbf{x}^{i}) + \varepsilon.
 \end{align}
Based upon \eqref{eq:yhat}, using the GP learning \eqref{eq:gp}, the posterior distribution parameters $\mu(\mathbf{x})$ and $\sigma(\mathbf{x})$ are obtained. Further, a straight forward method would be to keep obtaining more and more training samples $(f_l(\mathbf{x}^i), \mathbf{x}^i )$, and keep updating the posterior distribution parameters. This approach has some major difficulties. This method does not provide any bound on the possible uncertainty of posterior distribution. Using this method, we have no criteria to stop learning, and it has the same issue as MCS with the question: \textit{How many points are sufficient for learning? How much is the confidence in the output predictions?} 

\subsubsection{NP-PLF Learning}

To overcome these difficulties, we present the probabilistic learning bound (PLB). The PLB defines a range within which the target state variable $y_l$ will remain with the given probability. 
Now, based on GP-UCB regret bound \cite{srinivas2012information}, we present the PLB for any general random state variable  $y_l=f_l(\mathbf{x})$ (e.g., $|V_l|=f_l(\mathbf{x})$ ) for bus $k$, in the uncertain state subspace $\mathbf{x} \in \mathcal{S}$. If one concerns only the voltage magnitude and the real power injection $\mathbf{P} = \{P_l\}$ where $P_l = \textrm{real}(S_l)$, let's learn function $|V_l|=f_l(\mathbf{P})$. However, the voltage-reactive power,  and voltage-apparent power function can be learnt in the same manner.

\begin{theorem}\label{theo:NPPLF}
For a given $\delta \in (0,1) $, for any uncertain power system input vector $\mathbf{P}$, the inverse power flow solution function $|V_l|(\mathbf{P})$ for bus $l$ will be bounded with probability $1-\delta$ as
\begin{align}\label{eq:bound}
    \mu_N(\mathbf{P})- \xi_{max} \leq |V_l|(\mathbf{P}) \leq \mu_N(\mathbf{P})+ \xi_{max}
\end{align}
where $\xi_{max}=\max_{\mathbf{P}\in \mathcal{P}} \big \{\sqrt{\beta_{N+1}} \, \sigma_N(\mathbf{P})\big \}$ is PLB, and $\mathbf{P}$ lies in the uncertain state subspace $\mathcal{P}$ which corresponds to the projection of $\mathcal{S}$ to the real space. $N$ indicate number of training samples.
\end{theorem}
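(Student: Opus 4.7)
The plan is to derive the bound \eqref{eq:bound} as a direct specialization of the GP-UCB confidence result of Srinivas et al.~\cite{srinivas2012information}, which is the core technical ingredient underlying the sampling rule \eqref{eq:gpucb}. First, I would invoke the pointwise concentration inequality: under the GP prior on the inverse power flow function $f_l$ and the sampling strategy \eqref{eq:gpucb}, there exists a schedule $\{\beta_t\}$ (given explicitly in~\cite{srinivas2012information} in terms of $\delta$ and the size/metric entropy of the domain) such that for any $\delta \in (0,1)$ the event
\[ |f_l(\mathbf{P}) - \mu_{t-1}(\mathbf{P})| \leq \sqrt{\beta_t}\,\sigma_{t-1}(\mathbf{P}) \quad \forall \mathbf{P}\in\mathcal{P},\ \forall t \geq 1 \]
holds with probability at least $1-\delta$. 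This is essentially Lemma/Theorem of Srinivas et al., transplanted to the voltage-versus-power setting.

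Next, I would specialize at $t = N+1$ so that $\mu_N$ and $\sigma_N$ are the posterior parameters after the $N$ training samples selected via \eqref{eq:gpucb}. On the same high-probability event, rearranging the absolute value yields, for every $\mathbf{P} \in \mathcal{P}$,
\[ \mu_N(\mathbf{P}) - \sqrt{\beta_{N+1}}\,\sigma_N(\mathbf{P}) \leq |V_l|(\mathbf{P}) \leq \mu_N(\mathbf{P}) + \sqrt{\beta_{N+1}}\,\sigma_N(\mathbf{P}). \]
To obtain a single uniform slack $\xi_{max}$ usable across the whole uncertain input subspace (as the theorem statement demands), I would upper bound $\sqrt{\beta_{N+1}}\,\sigma_N(\mathbf{P})$ by its maximum over $\mathbf{P} \in \mathcal{P}$, which is exactly the quantity $\xi_{max}$ defined in the theorem. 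Substituting this uniform bound back in gives \eqref{eq:bound}, valid on the same event of probability at least $1-\delta$.

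The main obstacles I anticipate are: (i) justifying that the Srinivas et al.\ bound applies to $f_l$, i.e., that the inverse power flow map either lies in the RKHS of the chosen squared-exponential kernel with bounded norm or is admissible as a draw from the postulated GP prior; this would lean on the paper's earlier assumption that the power flow is solvable throughout $\mathcal{S}$ and on the smoothness of $f_l$ inherited from the implicit function theorem applied to \eqref{eq:pf}; and (ii) ensuring $\xi_{max}$ is well-defined, for which compactness of $\mathcal{P}$ (since physical injections are bounded) together with continuity of $\sigma_N$ in $\mathbf{P}$ is enough for the maximum to be attained. The specific value of $\beta_{N+1}$ that certifies the $1-\delta$ coverage is inherited verbatim from~\cite{srinivas2012information}, so no additional probabilistic analysis should be needed beyond quoting their schedule.
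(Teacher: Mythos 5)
Your proposal is correct and follows essentially the same route as the paper's own proof: both invoke the GP-UCB uniform confidence bound of Srinivas et al.\ (the paper cites its Theorem~6, together with Theorem~1 of the cited region-of-attraction work), open the absolute value $\bigl| |V_l|(\mathbf{P})-\mu_N(\mathbf{P})\bigr| \leq \sqrt{\beta_{N+1}}\,\sigma_N(\mathbf{P})$, and then pass to the maximum over $\mathbf{P}\in\mathcal{P}$ to obtain $\xi_{max}$. Your additional remarks on the RKHS-membership of $f_l$ and the attainment of the maximum are prudent clarifications of hypotheses the paper leaves implicit, but they do not change the argument.
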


\begin{proof}
The proof follows Theorem 6 in \cite{srinivas2012information} and Theorem 1 in \cite{zhai2019region}. The result on regret bound, after $N$ GP-UCB sampled training  points, provides the relation as \cite{srinivas2012information}: 
\begin{align}\label{eq:rbound}
     \bigg | |V_l|(\mathbf{P})-\mu_{N}(\mathbf{P})  \bigg | \leq \sqrt{\beta_{N+1}} \, \sigma_{N}(\mathbf{P}), \quad \forall \mathbf{P}\in \mathcal{P}.
\end{align}

Upon opening the modules, we obtain upper and lower bounds as a function of $\mathbf{P}$. Applying the maximum operator over the regret bound obtained after $N$ sampling iterations of GP-UCB, $\xi= \sqrt{\beta_{N+1}} \, \sigma_N(\mathbf{P})$, we obtain the PLB \eqref{eq:bound}. $\blacksquare$
\end{proof}

\begin{algorithm}[h]
	\caption{NP-PLF}
	\label{algo}
	\begin{algorithmic}
		\REQUIRE { $\mathcal{S}$ , $ \mathcal{X} \in \mathbb{R}^n$, $\delta$, $\xi_{max}$, $\mu_o$, $k(\mathbf{x},\mathbf{x}')$, $\mathbf{x}^{1}$, $ y_l(\mathbf{x}^{1})$ }
		\ENSURE {$\mathbf{x}^{i}$, $ y(\mathbf{x}^{i})$, $\mu_i(\mathbf{x})$, $\sigma_i(\mathbf{x})$, $\beta_{i+1}$; $i \in \{1, \dots , N\}$}
		\hrulefill
		\vskip 0.2em
		\WHILE {($ \max_{\mathbf{x} \in \mathcal{S}} \{\sqrt{\beta_{i+1}}\,\,\sigma_i(\mathbf{x})\}\leq \xi_{max} $)}
		\STATE {Select $\mathbf{x}^{i}= \argmax_{\mathbf{x} \in \mathcal{S}}  \mu_{i}(\mathbf{x}) + \sqrt{\beta_{i+1}}\,\,\sigma_{i}(\mathbf{x})$}
 		\IF {$\mathbf{x}^{(i)} \in \mathcal{X}$}
		\STATE {Select randomly $\mathbf{x}^{i}$}
		\ENDIF
		\STATE {Sample $ y(\mathbf{x}^{i})=f(\mathbf{x}^{(i)}) + \varepsilon $  by NRLF}
		\STATE {Update  $\mu_i(\mathbf{x})$ and  $\sigma_i(\mathbf{x})$ from \eqref{eq:gp}}
		\STATE {Update $\beta_{i+1}=2\|f\|_l^2+300\gamma_N\ln^3{(i/\delta)}$}
		\STATE {Update $\mathcal{X}=\mathcal{X} \cup \mathbf{x}^{i}$ and $i=i+1$}
		\ENDWHILE 
	\end{algorithmic}
\end{algorithm}

Theorem \ref{theo:NPPLF} provides a probabilistic guarantee for bounds of power flow solution. Further, the  probabilistic learning bound $\xi_{max}$ can serve as convergence criteria for the NP-PLF algorithm. The NP-PLF algorithm, designed based on Theorem \ref{theo:NPPLF}, is given as Algorithm \ref{algo}. The $\mathcal{X}$ represents set of training points and similar to description of matrix $\mathbf{X}$ given in section \ref{sec:gp}. The  $\|f\|_l$ indicate reproducing kernel Hilbert space (RKHS) norm while $\gamma_N$ is constant \cite{zhai2019region}. The prior mean $\mu_o$, and covariance function (kernel) $k(\mathbf{x},\mathbf{x'})$  are selected based on our prior understanding about $f_l$. The $y_l(\mathbf{x}^1)$ represents output at initial random input vector $\mathbf{x}^1$. As mentioned earlier, the input $\mathbf{x}$  

\begin{remark}
The proposed NP-PLF learning does not require any specific PDF of input variable $\mathbf{P}$. It can be seen as working with \textit{uniform probability distribution} where every value in rectangular region, $\mathcal{P}= \{\mathbf{P}|\mathbf{P}^{min}\leq \mathbf{P} \leq \mathbf{P}^{max}\}$, has same probability of occurrence. 
\end{remark}

\subsubsection{NP-PLF Testing }
In this step, once the Algorithm \ref{algo} converges, the state variable $y_l$ can be obtain for distribution of test points using $\mathbf{P}'$ in \eqref{eq:gp}. This means that for obtaining a probabilistic distribution of state variable against any input PDF, we use input PDF as test points over the output of the Algorithm \ref{algo}, $\mu_N(\mathbf{P})$ and $\sigma_N(\mathbf{P})$. Further, from Theorem \ref{theo:NPPLF}, the state PDF will be bounded by error $\xi_{max}$ with probability $1-\delta$. Therefore, the testing step can be performed for any class of probability distribution.

Here, it is important to highlight the difference between the proposed NP-PLF and the continuation power flow (CPF) method. The CPF attempts to trace the $P-V$ curve by increasing the load, generally in one dimension \cite{ajjarapu1992continuation}. The proposed method can learn the voltage variation curve in multiple dimensions as $\mathbf{x} \in \mathbb{R}^{2n}$ for $n$-bus network. The input dimension taken as $2n$ because complex power injection ($S_l$) is separated as real ($P_l$) and reactive power ($Q_l$) injection inputs ($S_l=P_l+jQ_l$) to capture their effects separately. Therefore, Algorithm 1 can also be interpreted as a GP based CPF method for $2n-dimensional$ input space. Nevertheless, it still suffers from the same issue of Jacobian ill-conditioning near the critical point \cite{iba1991calculation,jiang2019boundary, hungunsolvable, wolter2019differential}. Eliminating this assumption and finding the critical loading point is an objective of our ongoing research.  

More importantly, Algorithm \ref{algo} 
can work in parallel to learn many state variables simultaneously. Each such effort will sample points to learn the particular $y_l$ using GP-UCB \eqref{eq:gpucb} in parallel execution. 
Nevertheless, Multi-linear Gaussian Processes (MLGP) \cite{yu2017tensor}, AutoGP \cite{krauth2016autogp} can also be used to learn several variables simultaneously with Algorithm \ref{algo}. 


\section{Results and Discussion}

At first, we consider a situation when the renewable generator replaces the conventional generator into the IEEE 30-Bus system \cite{zimmerman2010matpower} 
Later, the load uncertainties are considered in 30-Bus and 118-Bus systems \cite{zimmerman2010matpower} at different buses. The biggest challenge in PLF involves the difficulty in modeling and obtaining the statistical distribution of random power generation variable, more so in case of solar power. Therefore, for the NP-PLF learning stage, we consider uniform distribution between zero to the maximum limit of renewable generator. However, other distributions can be considered in a similar way. In the table \ref{tab:err1}, number of training samples ($N$) required, to achieve $\xi_{max} \leq 1\%$ with probability $\geq 0.99$, is given with computation time. It is important to note here that as we are not using any specific distribution of uncertainty, we need not define and fix the type of renewable generator and NP-PLF can work with any type of source.

\begin{table}[t]
  \centering
  \caption{Average Percentage Relative Error in NP-PLF Solution Compared to 50000 MCS Simulations}
  \bgroup
\def\arraystretch{1}
    \begin{tabular}{c|cccc}
      \multirow{2}{*}{\shortstack{Uncertain Renewable \\ Generator}} & Variable & $\% \, \varepsilon_v$ & $N$ & \multirow{2}{*}{\shortstack{Time\\ (sec.)}} \\ &  &  &  \\
          \hline
     \multirow{2}{*}{\shortstack{$P_{g_3}$  \\ At bus $22$ }} & $|V_{21}|$    & 0.0014
 & 8 & 2.15 \\
          & $|V_{24}|$    & 0.0001 & 7 & 1.39\\
    \hline
         \multirow{2}{*}{\shortstack{$P_{g_4}$  \\ At bus $27$ }} & $|V_{25}|$    & 0.0082

 & 7 & 1.66 \\
          & $|V_{28}|$    & 0.0055 & 8 & 1.88\\
    \hline
     \multirow{2}{*}{\shortstack{$P_{g_5}$  \\ At bus $13$ }} & $|V_{15}|$    & 0.0006
 & 13 & 2.92 \\
          & $|V_{24}|$    & 0.0049 & 7 & 1.89\\
    \hline
       \end{tabular}%
    \egroup
  \label{tab:err1}%
\end{table}%

\begin{figure}[t]
    \centering
    \includegraphics[width=0.9\columnwidth]{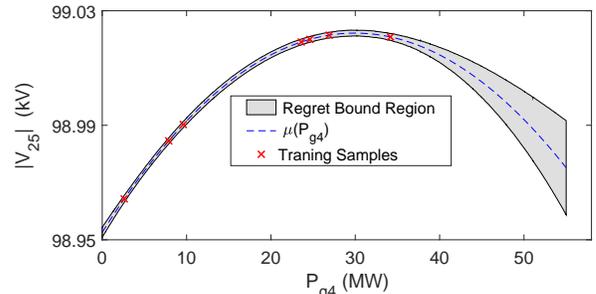}
    \vspace{-1.0em}
	\caption{ $|V_{25}|$ as a function of uncertain renewable power injection ($P_{g_4}$) at node 27 having uniform distribution between zero to 55 MW.}
    \label{fig:rg4_V25_gp}
\end{figure}

It is clear from table \ref{tab:err1} that for various cases, the proposed NP-PLF has been able to achieve the higher accuracy results in very less time when compared to the MCS method. Further, the fig.\ref{fig:rg4_V25_gp} shows the voltage variation obtained as ``\textit{semi-explicit}'' form via learning Algorithm \ref{algo} for one dimensional input subspace where $P_{g_4}$ indicate random real power generator by generator 4 connected at 27-th bus. The curve also indicates that learning regret is higher at locations where training samples are not obtained. Thus, regret bound region can further be decreased using more samples, especially with a higher value of $P_{g_4}$. Most importantly, upon completion of the NP-PLF learning stage, the complete $P-V$ curve is obtained. The average percentage relative error index $\% \, \varepsilon_v$, for $N_s$ testing samples, is defined as \cite{nosratabadi2018nonparametric}:
\begin{align}
    \% \, \varepsilon_v =\frac{\sum_{k=1}^{N_s}\bigg |\dfrac{V^{MCS}_l-V^{GP}_l}{V^{MCS}_l}\bigg |}{N_s} \times 100
\end{align}


As mentioned before, the proposed NP-PLF method can work with any class of input uncertainty distribution through testing phase. The fig. \ref{fig:rPg4_normal_GP_MCS} and fig. \ref{fig:rPg4_gamma_GP_MCS} are obtained by varying the $P_{g_4}$ with normal and gamma distribution respectively. The comparison between histograms obtained using MCS and proposed NP-PLF method validates that proposed method can calculate statistical features, mean and standard deviation, of PLF output for any type of input uncertainty distribution.  Here, in fig. \ref{fig:rPg4_normal_GP_MCS}, the distribution is one sided because the $|V_{25}|_{max}=99.02 kV$ as indicative in the fig.\ref{fig:rg4_V25_gp}. The maximum number of samples in $P_{g_4}$ is around the mean (28 MW) which leads to voltage near maximum value. Thus one sided distribution of $|V_{25}|_{max}$ is obtained in fig.\ref{fig:rPg4_normal_GP_MCS}. 

\begin{figure}[t]
    \centering
    \includegraphics[width=\columnwidth]{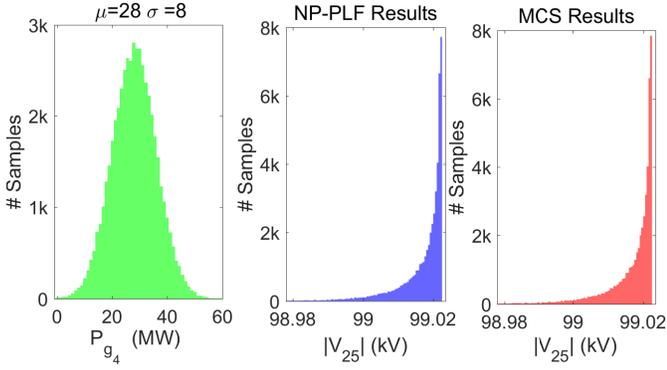}
    \vspace{-1.5em}
	\caption{Histogram for $|V_{25}|$ with uncertain $P_{g_4}$ at node 27 having normal distribution using proposed NP-PLF and MCS method (50000 samples). Error result is in Table \ref{tab:err1}.}
    \label{fig:rPg4_normal_GP_MCS}
          \vspace{-0.1em}
\end{figure}

\begin{figure}[t]
    \centering
    \includegraphics[width=\columnwidth]{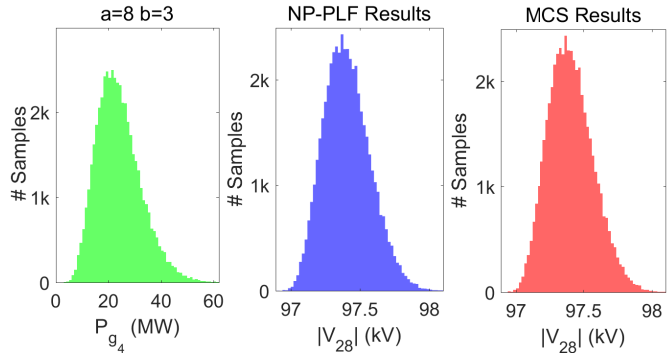}
    \vspace{-1.5em}
	\caption{Histogram for $|V_{28}|$ with uncertain $P_{g_4}$ at node 27 having gamma distribution (shape parameter a=8 and scale parameter b=3) using proposed NP-PLF and MCS method (50000 samples). Error result is in Table \ref{tab:err1}}
    \label{fig:rPg4_gamma_GP_MCS}
           \vspace{-0.0em}
\end{figure}

\begin{table}[b]
  \centering
   \caption{Results Corresponding to NP-PLF Solution Shown in fig.\ref{fig:rPd30_Voltages_30bus}}
  \bgroup
\def\arraystretch{1}
    \begin{tabular}{c|ccc}
       Variable & $ \xi_{max} (kV)$ & $N$ & Time (sec.) \\
          \hline
         $|V_{30}|$    & 0.0784 & 10 & 2.35 \\
          $|V_{29}|$    & 0.0308 & 15 & 3.54 \\
     $|V_{28}|$    & 0.0443  & 8 & 1.89 \\
      $|V_{26}|$    & 0.0200 & 8 & 1.90 \\
    \hline
       \end{tabular}%
    \egroup
  \label{tab:er_timenode30}%
\end{table}%

For indicating the effect of load variation, different node voltage magnitudes are expressed in ``\textit{semi-explicit}'' form using the learning step and shown in fig. \ref{fig:rPd30_Voltages_30bus}. As indicated, the $|V_{30}|$ gets affected maximum with variations in load demand at node 30 while as we move away, the effect decreases largely. Yet, it is clear that the proposed method has been able to record the complete non-linearity of power flow equations and effect on all node voltages. Table \ref{tab:er_timenode30} shows that low values of $\xi_{max}$ are obtained with a very less number of training samples. These results show the higher accuracy and speed of the proposed method. The fig. \ref{fig:rPQd_V75_118bus} shows $|V_{75}|$ variation in $2-dimensional$, $P_{d_{75}}-Q_{d_{75}}$ space. It is important to understand that for this learning, MCS would require very large number of points while proposed method has been able to do this with very less points. The fig.\ref{fig:rPQd_V75_118bus} is drown with $\xi_{max} \leq 1\%$ and probability $\geq 0.99$.



In the following, we report the time consumption for the testing stage with unoptimized codes. The GPML toolbox \cite{rasmussen2010gaussian} with MATLAB 2018b on PC having Intel Xeon E5-1630v4@3.70 GHz, 16.0 GB RAM is used for simulations. The time consumption increases with increment in the input subspace size. For larger problems, works on approximation methods (chapter 8 \cite{williams2006gaussian}), \cite{krauth2016autogp} can be used for future works. Also, the proposed NP-PLF is divided into stages where the learning stage can be done offline, improving the overall computational performance. Testing is very less time consuming, and 50000 points take 0.073563 seconds only to test in fig.\ref{fig:rPg4_gamma_GP_MCS} while 50000 samples take 108.15 seconds with MCS.

\section{Conclusion}
In this paper, a novel non-parametric probabilistic load flow (NP-PLF) is presented to estimate the nodal voltages for uncertain power injections. The proposed method attempts to understand the power balance equality constraint under uncertain input space to improve the power system's decision-making process. The proposed NP-PLF consists of two steps. The learning step has been built on GP regression, and voltage solution has been learned as a function of arbitrary power injections providing a ``\textit{semi-explicit}'' form with probabilistic learning bound (PLB). Then, the testing step has shown to approach the final voltage distribution, in terms of inverse power flow solutions, for any class of power injection uncertainty distribution. The proposed algorithm was tested in the IEEE 30-bus and IEEE 118-bus systems. The simulation results prove that the NP-PLF method can obtain statistical information using very few sampling points. Also, the relative error-index is sufficiently small, while the computational time is much less when compared with the traditional MCS method. Future works involve the development of multiple applications based on the ``\textit{semi-explicit}'' learning method developed in this work. 

\begin{figure}[t]
    \centering
    \includegraphics[width=\columnwidth]{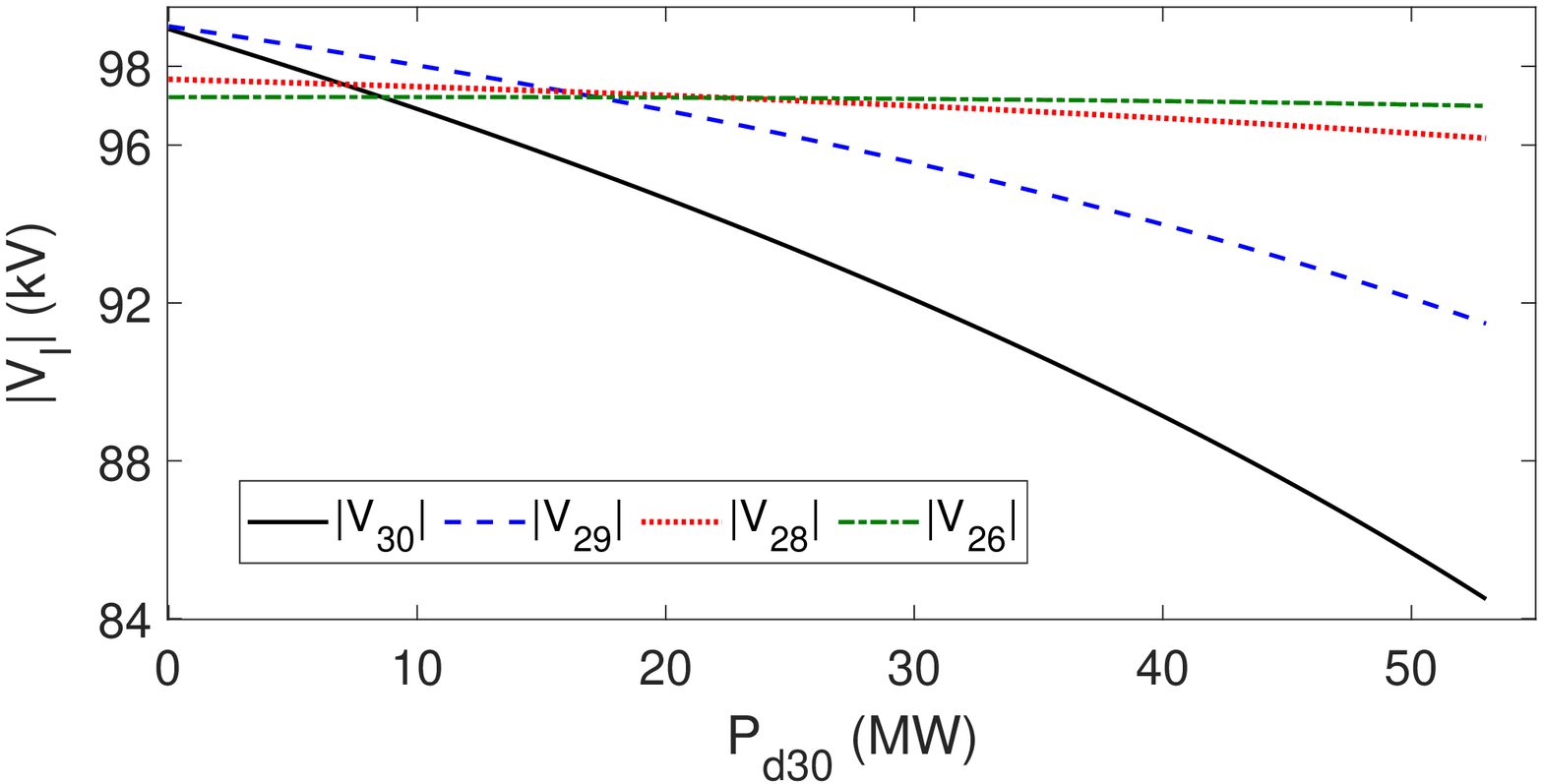}
    \vspace{-2.0em}
	\caption{$|V_l|$ as a function of uncertain load $P_{d_{30}}$ in 30-Bus system}
    \label{fig:rPd30_Voltages_30bus}
         \vspace{-0.1em}
\end{figure}

\begin{figure}[t]
    \centering
    \includegraphics[width=\columnwidth]{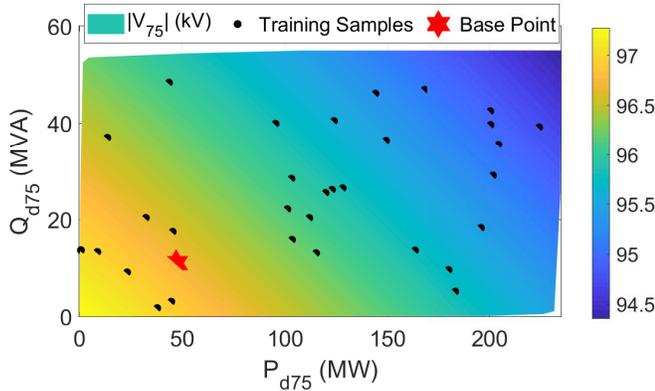}
    \vspace{-2.5 em}
	\caption{$|V_{75}|$ as a function of uncertain load $S_{d_{30}}$ in 118-Bus system}
    \label{fig:rPQd_V75_118bus}
\end{figure}

\section*{Acknowledgement}
Parikshit Pareek, Wang Chuan, and Hung D. Nguyen are supported by NTU SUG, MOE, EMA and NRF fundings.

\bibliographystyle{IEEEtran}
\bibliography{main}

\end{document}